%
%
%
%
%
%
%
\documentclass[nofootinbib,%
 reprint,
 amsmath,amssymb,
 aps,
]{revtex4-1}

\usepackage{dcolumn}
\usepackage{bm}

\usepackage{mathtools}
\usepackage{amsthm}

\usepackage{graphicx}
\usepackage{caption}
\usepackage{subcaption}
\usepackage{algorithm}
 \usepackage{algpseudocode}
\newtheorem{theorem}{Theorem}
\raggedbottom



\begin{document}

\preprint{APS/123-QED}

\title{Accessibility and Delay in Random Temporal Networks}

\author{Shahriar Etemadi Tajbakhsh}
\email{shahriar.etemaditajbakhsh@eng.ox.ac.uk}
\author{Justin  P. Coon}%
 \email{justin.coon@eng.ox.ac.uk}

\author{David E. Simmons}%
 \email{david.simmons@eng.ox.ac.uk}

\affiliation{%
 Department of Engineering Science\\
 University of Oxford
}%

%

%

\begin{abstract}
In a wide range of complex networks, the links between the nodes are temporal and may sporadically appear and disappear. This temporality is fundamental to analyze the formation of paths within such networks. Moreover, the presence of the links between the nodes is a random process induced by nature in many real-world networks. In this paper, we study random temporal networks at a microscopic level and formulate the \emph{probability of accessibility} from a node \emph{i} to a node \emph{j} after a certain number of discrete time units $T$. While solving the original problem is computationally intractable, we provide an upper and two lower bounds on this probability for a very general case with arbitrary time-varying probabilities of links' existence. Moreover, for a special case where the links have identical probabilities across the network at each time slot, we obtain the exact probability of accessibility between any two nodes.
Finally, we discuss scenarios where the information regarding the presence and absence of links is initially available in the form of time duration (of presence or absence intervals) continuous probability distributions rather than discrete probabilities over time slots. We provide a method for transforming such distributions to discrete probabilities which enables us to apply the given bounds in this paper to a broader range of problem settings.


\end{abstract}

\pacs{Valid PACS appear here}
\maketitle
\section{Introduction}
The existence of a connection between any pair of nodes in many types of networks is a temporal event and also random in many cases. For instance, human beings meet for some period of time and walk away afterwards.  Because of this temporality, static graphs or even random graphs are incapable of modelling many aspects of random temporal networks. For instance, a \emph{path} between two specific nodes $i$ and $j$ in a static network is a sequence of nodes starting from $i$ and ending at $j$ given that an edge exists between any two successive nodes in the sequence. In a temporal network and over a time window of observation, a sequence of nodes forms a path (or more precisely an \emph{open path} which is defined formally later) if the existence of an edge between two subsequent nodes in the sequence maintains causality. Consider a traveler starting its journey from node $i$ to node $j$. The traveler waits at each node and jumps to the next node from its current node as soon as an edge becomes available between these two nodes. A path exists from $i$ to $j$ if such a traveler reaches $j$ within the observation time window. Therefore, the existence of a temporal path depends on the availability of an edge on or after the current time between its current node and the next node in the sequence, regardless of whether or not an edge had existed before the traveler arrived at this current node.

If there exists at least one temporal path from $i$ to $j$ over a discrete time window of $1, \dots, T$, $j$ is said to be \emph{accessible} from $i$ and such an event is denoted by $i\xrightarrow{T} j$. A directed graph representing the accessibility relation between every pair of nodes in a temporal network is called the \emph{accessibility graph}, where there exists an edge from any node $i$ to any node $j$ if $i\xrightarrow{T} j$. A temporal network over the window $1, \dots, T$ can be modelled as a sequence of $T$ \emph{adjacency} graphs over the time window of observation. Fig. \ref{f_access} shows the adjacency and accessibility graphs for a set of four nodes over three time slots. Obviously the accessibility and adjacency graphs are identical at $t=1$. By $t=2$, node $4$ is accessible from $2$ as there is an edge between $2$ and $1$ at $t=1$ and an edge connects $1$ to $4$ at $t=2$. However, this is not the case for the opposite direction as there is an edge between $4$ and $3$ at $t=1$ but no edge from $3$ to $2$ at $t=2$. Therefore $2$ is not accessible from $4$ by $t=2$. This directionality in the accessibility graph is an immediate consequence of causality in the formation of temporal paths.

An interesting method for obtaining the accessibility graph adjacency matrix (AGAM) in temporal networks is introduced in \cite{lentz2013unfolding}. It should be noted that for a static graph with adjacency matrix $\mathbf{K}_0$, $(\mathbf{K}_0)^T$ gives the number of paths of length at most $T$ between any two nodes, and by changing every non-zero element to $1$ the AGAM is obtained. However, in temporal networks, since the edges between two nodes may appear or disappear at any time, a traveller on the graph might need to \emph{wait} at a specific node for a certain number of time slots until an edge to the next hop becomes available.  Given the adjacency matrices of the adjacency graphs over the window $1,\dots, T$ denoted by $\mathbf{K}_1, \dots, \mathbf{K}_T$, in \cite{lentz2013unfolding}, this waiting at the current node is modelled by adding the identity matrix $\mathbf{1}$ to each adjacency matrix $\mathbf{K}_t$. Therefore calculating $\prod_{t = 1}^T (\mathbf{1}+\mathbf{K}_t)$ and changing all the non-zero elements to $1$ gives the AGAM by time $T$ for such a temporal graph.

The input to the method given in \cite{lentz2013unfolding} is the adjacency matrices. However, in many cases the temporal variations in the edges of the network and their presence and absence is a random process (e.g. wireless ad-hoc networks, human centric networks, etc.).
  In this paper we study the notion of accessibility in random temporal networks. We assume that instead of knowing the adjacency matrices that deterministically identify the presence or absence of an edge between two specific nodes at a certain time slot, we have the probabilities of such events in hand. In other words, a random temporal network is defined as a sequence of random graphs, each one associated with one time slot. Our objective is to obtain the \emph{probability of accessibility}, denoted  by $P(i\xrightarrow{T} j)$, from any node to any other over a window of observation. Since the total number of possible paths between any two vertices grows exponentially and also due to the dependence between paths with common edges, calculation of such probabilities is computationally intractable. In this paper, we provide a non-trivial upper bound and two different lower bounds on these probabilities. Our numerical results show that the accessibility probability obtained by Monte-Carlo simulations of such random temporal networks is very close to the given upper bound.
  Moreover, we examine the upper bound as a predictor for the probability of accessibility over a real-world dataset obtained from a vehicular network (taxis in Rome) \cite{roma-taxi-20140717}. The results show a high correlation between the predicted values and the actual observations.
     Also for the special case where the probability of edges is identical for all the edges across the network in any given time slot (but can vary from one time slot to another), we obtain the exact probability of accessibility.

   It should be noted that in many cases instead of discrete probabilities for the edges over each time slot, the distribution of the duration of the intervals of presence or absence of edges (and mostly in continuous time domain) is available. For instance, the distribution of the inter-contact time between individuals in human centric networks has been studied in the literature   \cite{chaintreau2007impact}. To be able to apply the bounds provided in this paper, these continuous distributions need to be transformed to discrete probabilities of edges \footnote{An edge is assumed to be present between two individual while they are in contact with each other, i.e. when they are within a given proximity of each other.}. In this paper, such transformations are provided. These transformations extend the range of problems that the given bounds are applicable to. Specifically, they provide a general framework for analyzing delay problems in multi-hop networks (e.g. in delay tolerant networks \cite{hong2008routing}) using the bounds obtained in this paper.


  \begin{figure}[t]
  \centering
  \includegraphics[trim = 30mm 52mm 15mm 15mm, clip,scale=.30]{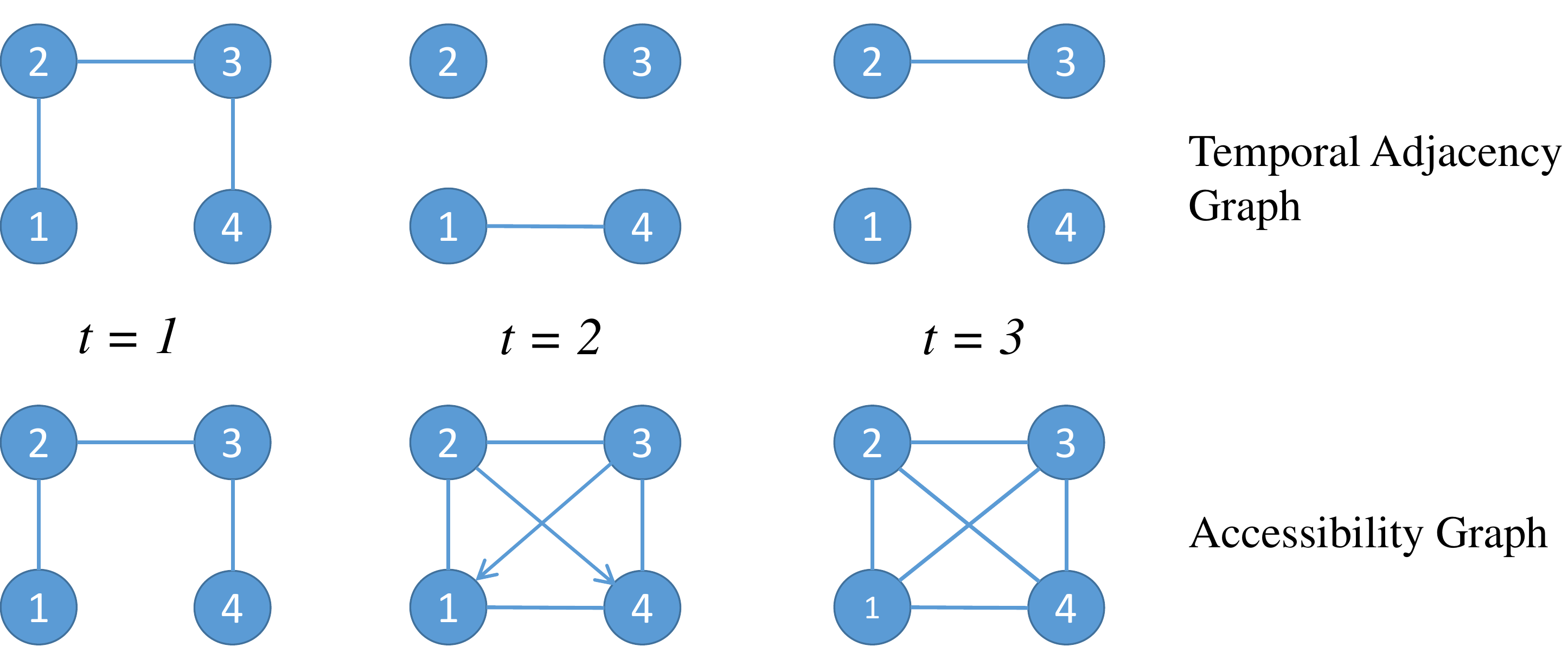}
  \caption{Accessibility Graph} \label{f_access}
\end{figure}

  Current studies in the literature of temporal networks and specifically the notion of accessibility (reachability) can be categorized from different perspectives. Firstly it should be noted that accessibility has been the core of many studies in a wide range of contexts even if the term accessibility (or reachability) has not been explicitly used. Peer-to-Peer networks \cite{qiu2004modeling, stoica2001chord, gkantsidis2005network}, wireless multi-hop networks \cite{de2005high, grossglauser2001mobility}, gossiping over networks \cite{boyd2004gossip, kempe2004spatial, deb2006algebraic}, prevalence of epidemic diseases \cite{eames2002modeling, newman2002spread, keeling2005networks, pastor2001epidemic}, information diffusion in social communication networks \cite{kossinets2008structure, bakshy2012role} or spreading patterns of viruses on smart phones \cite{wang2009understanding} are examples of studies with the theme of accessibility. A very closely related problem to accessibility is delay or trip duration in networks. The duration of a trip is a function of the dynamics of the links. We discuss the relationship between accessibility probability and expected delay (trip duration) in Section \ref{sec:numerical}. An extensive body of research has been devoted to this topic including analysis of the data collected from transportation networks \cite{mokhtarian2004ttb, levinson2005rational}, mathematical modelling of trip durations in human transportation networks at a macroscopic level \cite{kolbl2003energy}, shortest route in time dependant networks \cite{cooke1966shortest} and delay in delay performance of wireless delay tolerant networks \cite{hong2008routing}.

  In a large fraction of studies on path formation and travel duration in temporal networks, methods for measuring various parameters in such networks which are applicable to deterministic (known) temporal networks are proposed \cite{cooke1966shortest, pan2011path, lentz2013unfolding}. However, in this paper we assume that the adjacency evolution of the network graph over the observation time is unknown and only the probabilities of edges' presence between nodes are available. For a special case of such random temporal networks, i.e. the clique network, and from a macroscopic level perspective, the speed of information dissemination is studied in \cite{akrida2014ephemeral}. Another relevant paper to the context of accessibility in random temporal networks is \cite{arenas2010discrete} where the probability of being infected by an epidemic disease is obtained where the individuals are in contact with given probabilities. However, the fundamentally important notion of the dependencies between the paths connecting two nodes caused by common edges between such paths, is ignored.
   In this paper we deal with this dependency and provide an analytical upper and two lower bounds on the accessibility probability. In particular, the theoretical approach used in this paper for obtaining the upper bound is based on a Fortuin-–Kasteleyn-–Ginibre (FKG) correlation inequality \cite{fortuin1971correlation, harris1960lower}, which gives a deeper insight to the problem and provides a basis for analyzing further complex models. Moreover, in our model we consider the very general scenario of time-varying arbitrary probabilities of edges' existence.




 \section{System Model}
 We consider a \emph{random temporal network} with $N$ nodes (vertices) represented by $V=\{1,\dots, N\}$ and a set of discrete time slots $\mathcal{T} = \{0, 1,2, \dots, T\}$.
  There are a total of $M(T) :=N^{T-1}$ vertex sequences of length $T$ between two vertices $i$ and $j$. This is because  at each time we can choose any node in the graph to be the next step (except for the last time slot where node $j$ has been selected).
  The $m$th sequence (possibly with repeated nodes) is represented by
\begin{equation}
A_m^{ij}(T)=v_m^{ij}(0)\dots v_m^{ij}(T),\label{eq:sequence}
\end{equation}
  where $$v_m^{ij}(t)\in \{1,\dots, N\}, \forall t\in \mathcal{T}\backslash \{0, T\},v_m^{ij}(0)=i, v_m^{ij}(T)=j.$$
  Such a sequence of nodes is called a temporal path. An edge between a pair of distinct nodes $u$ and $v$ at time $t$ is denoted by the triple $$(u,v,t),\;\mathrm{where}\; u,v\in \{1,\dots, N\}\;\mathrm{and}\;t\in\mathcal{T}.$$ The triple is defined to be \emph{open} if in the realization of the network the link between these two nodes is physically present. We assume that an edge is open between two nodes $u$ and $v$ with probability $p_{uv}(t)$, independent of other edges.
  A temporal path $A_m^{ij}(T)$ is said to be an open path if any pair of \emph{distinct} successive nodes $v_m^{ij}(t)v_m^{ij}(t+1)$ in the sequence is an open edge.
  In other words we use the terms \emph{open edge} and \emph{open path} adopted from \emph{percolation theory} to identify the realization of an edge or a path. A pair of successive \emph{non-distinct} nodes $v_m^{ij}(t)v_m^{ij}(t+1)$ is an indication of remaining at the same node from time $t$ to $t+1$.

  We use the following compact notation to denote the probability of the event that a given path is open
  $$  P\left(A_m^{ij}(t)\right)\equiv P\left(A_m^{ij}(t) ~is ~open \right).$$ We apply this convention to all the probabilities of the sets corresponding to the temporal paths, including edge triplets $(u,v,t)$.

%
Moreover, we denote a temporal path from $i$ to $j$ with $v_m^{ij}(T-1)=\ell$ by $B_m^{i\ell j}(T)$. The set of all paths inclusive of $\ell$ as their node at time $T-1$ is denoted by $$B^{i\ell j}(T)=\{B_1^{i\ell j},\dots, B_{M(T-1)}^{i\ell j}\}.$$ Our objective is to find the probability that at least one \emph{open temporal path} exists from a given node $i$ to another node $j$ over the time window $\mathcal{T}$.

 \section{Exact Method for Equal Edge Probabilities} \label{sec:exact}
 In this section, we assume that $p_{uv}(t)=p(t)$ (i.e. the probability can change over time but is identical for all the edges in the network at each time $t$). In other words, at each time slot $t$ the network is equivalent to a classic Erd\"{o}s-R\'{e}nyi graph. We start at node $i$ and begin visiting other nodes. Any node $u$ at time $t=1$ is labeled as \emph{visited} if $(i,u,1)$ is open.
 We denote the set of nodes visited for the first time at time slot $t'$ by $\omega(t')$ and the set of all nodes visited from $t = 1$ to $t = t'$ by $W(t')$. Therefore $W(t+1)=W(t)\cup \omega(t+1)$. A node is labeled as visited in time $t$ if there exists an open edge between any node in $W(t-1)$ and this node. Obviously, the total number of visited nodes in $t=1$ is a binomial random variable $B(N-1,p(1))$. If we assume that $|W(t-1)|=\ell$, we have $|\omega(t)|\sim B(N-1-\ell, 1-(1-p(t))^{\ell})$. Therefore, we can conclude that
\begin{align}
&P(|W(t)|=k)=\sum_{\ell=0}^k P(|W(t-1)|=\ell)\dbinom{N-1-\ell}{k-\ell}\nonumber\\
&\times (1-(1-p(t))^{\ell})^{k-\ell}(1-p(t))^{N-1-k}\label{eqq1}
\end{align}
 \begin{figure}[t]
  \centering
  \includegraphics[trim = 25mm 25mm 15mm 5mm, clip,scale=.32]{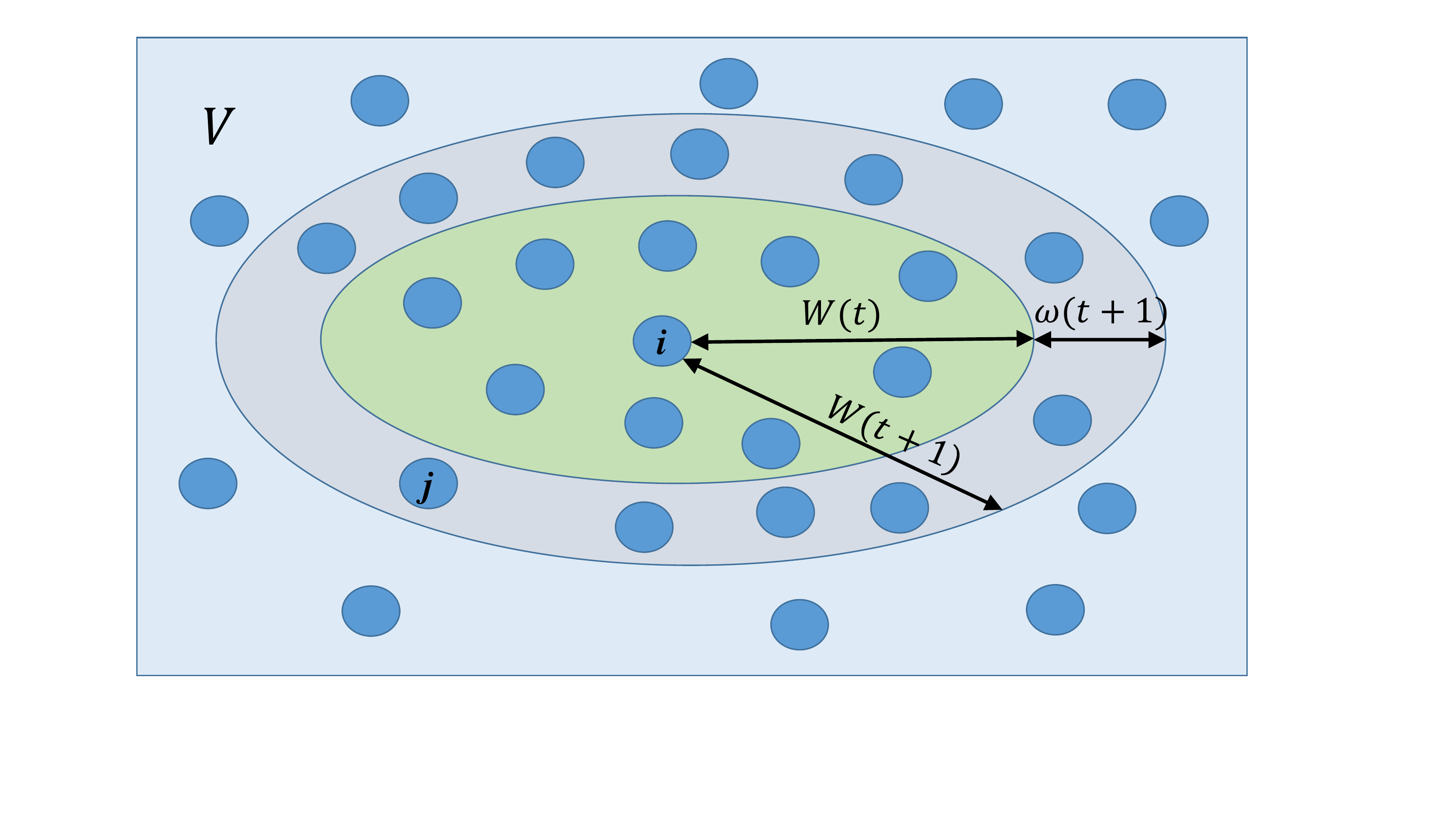}
  \caption{Node $j$ has not been visited by time $t$. In time slot $t+1$ node $j$ falls into the set of nodes labeled as visited.} \label{fig:exact}
\end{figure}

The probability $P(i\xrightarrow{T} j)$ is equivalent to the probability of $j$ being labeled as visited by time $T$ (see Fig. \ref{fig:exact}). Therefore,
\begin{align}
P(i\xrightarrow{T} j)&=P(j\in W(T))\nonumber\\
&=\sum_{\ell=1}^N P(j\in W(T)\Big| |W(T)|=\ell)P(|W(T)|=\ell)\nonumber\\
&=\sum_{\ell=1}^N \frac{\ell}{N-1}P(|W(T)|=\ell),\label{eqq2}
\end{align}
and one can obtain $P(i\xrightarrow{T} j)$ recursively.

\section{Upper Bound}\label{sec:upper}
Generalizing the exact method in Section \ref{sec:exact} to the case of arbitrary time varying probabilities is not straightforward; and even if possible, it would be computationally intractable. Therefore, we propose a different method in this section, which provides an upper bound on $P(i\xrightarrow{T} j)$ given that the probabilities $p_{ij}(t)$ can take any value (between $0$ and $1$) at each time slot. The event of at least one open path existing from node $i$ to $j$ is the complement of the event that no open paths exist from $i$ to $j$. Thus, obtaining the probabilities of every path from $i$ to $j$ should give the desired accessibility probability. However, it should be noted that, firstly, even finding the probability of one path is not straightforward because the number of trials at each node to jump to the next node (waiting time at each node) is a random variable by itself and in general, of a different probability distribution; secondly, the number of paths from $i$ to $j$ grows exponentially with $t$; thirdly and most importantly, different paths might be \emph{positively correlated} if they have any edge in common in the same time slot. In the following, the derivation of the upper bound is discussed.

We associate a dependant variable $\alpha_{ij}(t)$ to any pair of nodes $(i,j)$. This variable is formed recursively and given as follows
\begin{equation}
\begin{split}
&\alpha_{ij}(t)=1-\prod_{\ell=1}^{N}\big(1-\alpha_{i\ell}(t-1)p_{\ell j}(t)\big)\\
&\alpha_{ij}(1)=p_{ij}(1).\label{eq:alphadef}
\end{split}
\end{equation}
In the following theorem we show that $\alpha_{ij}(t)$ is an upper bound for the probability of an open temporal path existing from any node $i$ to any node $j$.
\begin{theorem}
  $P\left(i\xrightarrow{T} j\right)\leqslant \alpha_{ij}(T)$, for all $(i,j)\in V\times V$ and any positive integer $T\geqslant 1$.
\end{theorem}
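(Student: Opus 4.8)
The plan is to prove the statement by induction on $T$, with a Fortuin--Kasteleyn--Ginibre (FKG)/Harris correlation inequality doing the essential work in the inductive step. The base case $T=1$ is immediate: the event $\{i\xrightarrow{1}j\}$ is exactly the event that the edge $(i,j,1)$ is open (or is the sure event when $i=j$, under the convention $p_{\ell\ell}(t)\equiv 1$ that encodes the fact that a traveller may always wait at its current node), so $P(i\xrightarrow{1}j)=p_{ij}(1)=\alpha_{ij}(1)$.

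For the inductive step, fix $T\geqslant 2$ and assume $P(i\xrightarrow{T-1}\ell)\leqslant\alpha_{i\ell}(T-1)$ for every pair. First I would decompose over the penultimate node: $i\xrightarrow{T}j$ holds if and only if there is a node $\ell$ with $i\xrightarrow{T-1}\ell$ \emph{and} the edge $(\ell,j,T)$ open (the case $\ell=j$ being absorbed by the waiting convention, and $\{i\xrightarrow{T-1}\ell\}$ being an event measurable with respect to the edge states in slots $1,\dots,T-1$ only). Let $X_\ell$ be the indicator of $\{i\xrightarrow{T-1}\ell\}$ and let $Y_\ell$ be the indicator that $(\ell,j,T)$ is open; the $Y_\ell$ are independent $\mathrm{Bernoulli}\bigl(p_{\ell j}(T)\bigr)$ variables, independent of all the $X_\ell$. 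Then
\begin{equation}
1-P\bigl(i\xrightarrow{T}j\bigr)=\mathbb{E}\!\left[\prod_{\ell=1}^{N}\bigl(1-X_\ell Y_\ell\bigr)\right].
\end{equation}

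The crux is the correlation inequality. Condition on $Y=(Y_1,\dots,Y_N)$: given $Y$, each factor $1-X_\ell Y_\ell$ is a non-increasing function of the (still independent) edge states in slots $1,\dots,T-1$, because $\{i\xrightarrow{T-1}\ell\}$ is an increasing event --- adding open edges cannot destroy an already open temporal path. Applying the FKG/Harris inequality to this family of non-increasing functions, and using $\mathbb{E}[X_\ell\mid Y]=P(i\xrightarrow{T-1}\ell)$ by independence, gives
\begin{equation}
\mathbb{E}\!\left[\prod_{\ell=1}^{N}\bigl(1-X_\ell Y_\ell\bigr)\,\Big|\,Y\right]\;\geqslant\;\prod_{\ell=1}^{N}\Bigl(1-Y_\ell\,P(i\xrightarrow{T-1}\ell)\Bigr).
\end{equation}
Taking expectation over the independent $Y_\ell$ replaces $Y_\ell$ by $p_{\ell j}(T)$; then the inductive hypothesis $P(i\xrightarrow{T-1}\ell)\leqslant\alpha_{i\ell}(T-1)$ together with the fact that every factor lies in $[0,1]$ yields $1-P(i\xrightarrow{T}j)\geqslant\prod_{\ell=1}^{N}\bigl(1-\alpha_{i\ell}(T-1)p_{\ell j}(T)\bigr)=1-\alpha_{ij}(T)$, which is the claim.

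The main obstacle is precisely the positive correlation between temporal paths that share a common edge in a common slot: a naive union bound over the $\ell$-indexed events would point the inequality the wrong way, so one must recognise that these events are all increasing functions of the \emph{same} underlying independent edge variables, which is exactly the hypothesis under which FKG bounds from \emph{below} the probability that none of them occurs. The remaining work is bookkeeping --- checking that $\{i\xrightarrow{T-1}\ell\}$ is genuinely an increasing event depending only on slots $1,\dots,T-1$ (so the conditioning on $Y$ is harmless), verifying the monotonicity and nonnegativity of each product factor so the substitution of $\alpha_{i\ell}(T-1)$ for $P(i\xrightarrow{T-1}\ell)$ is legitimate, and handling the $\ell=j$ term cleanly through the waiting convention.
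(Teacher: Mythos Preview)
Your proof is correct and follows essentially the same approach as the paper: induction on $T$, decomposition over the penultimate node $\ell$, independence of the final-slot edges from earlier slots, and the Harris--FKG inequality applied to the decreasing events ``no open path through $\ell$'' to lower-bound the probability of their intersection by the product. Your presentation differs only cosmetically---you condition on the last-slot edges $Y$ and apply FKG to the slot-$(1,\dots,T-1)$ variables before averaging, whereas the paper applies FKG directly to the full events $\bigcup_m B_m^{i\ell j}(T+1)$---but the substance is the same.
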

\begin{proof}
We use induction by showing that $P(i\xrightarrow{T+1} j)\leq \alpha_{ij}(T+1)$ given that $P(i\xrightarrow{T} j)\leq \alpha_{ij}(T)$. Obviously the theorem holds for $T=1$ because $\alpha_{ij}(1)=p_{ij}(1)$ by definition~\eqref{eq:alphadef}.
At time $T$, we have
$$P(i\xrightarrow{T} \ell)
 = P\big(\bigcup_{m=1}^{M(T)}A_m^{i\ell}(T)\big)\leqslant \alpha_{i\ell}(T).$$
where the inequality follows from the induction assumption. This implies that
\begin{equation}\label{indassum}
\begin{split}
P\big(\bigcup_{m=1}^{M(T)}A_m^{i\ell}(T)\big)p_{\ell j}(T+1)\quad\leqslant \alpha_{i\ell}(T)p_{\ell j}(T+1).
\end{split}
\end{equation}
Since the existence of open edges between nodes are independent random variables, we have:
\begin{equation} \label{eqqqq}
\begin{split}
P\Big(\bigcup_{m=1}^{M(T)}&A_m^{i\ell}(T)\Big)p_{\ell j}(T+1)\\
 &=P\Big(\big(\bigcup_{m=1}^{M(T)}A_m^{i\ell}(T)\big)\cap (\ell, j, T+1)\Big)\\
&=P\Big(\bigcup_{m=1}^{M(T)}\big(A_m^{i\ell}(T)\cap  (\ell, j, T+1)\big)\Big) \\
&= P\Big(\bigcup_{m=1}^{M(T)}B_m^{i\ell j}(T+1)\Big).
\end{split}
\end{equation}
Combining \eqref{indassum} and \eqref{eqqqq}, we have
\begin{equation}\label{eq:res}
\begin{split}
&P\Big(\bigcup_{m=1}^{M(T)}B_m^{i\ell j}(T+1)\Big)\leqslant \alpha_{i\ell}(T)p_{\ell j}(T+1) \\
&\Rightarrow P\Big(\bigcap_{m=1}^{M(T)}\overline{B}_m^{i\ell j}(T+1)\Big)\geqslant 1-\alpha_{i\ell}(T) p_{\ell j}(T+1),
\end{split}
\end{equation}
where $\overline{B}_m^{i\ell j}(T+1)$ is the complement of the event $B_m^{i\ell j}(T+1)$.
Each set of paths $\bigcup_{m=1}^{M(T)}B_m^{i\ell j}(T+1)$ for $\ell=1,\dots, N$ is a family of monotonically decreasing events\footnote{A Family $\mathcal{A}$ of subsets of $K = \{1,2,\dots,k\}$ is monotone decreasing if $A\in\mathcal{A}$ and $A'\subseteq A \Rightarrow A'\in \mathcal{A}$. The collection of any open path (viewed as an edge set) and all its subpaths (also viewed as edge sets), form a monotonically decreasing family. This is because if a path is open (and consequently in the family of open paths), any subpath would also be open and hence an element of the family. Here, to avoid unnecessary complication in the notations, we have used $B_m^{i\ell j}(T+1)$ to represent such a family of events.}. Therefore, using the Harris-FKG inequality (Theorem 6.3.2 in \cite{alon2004probabilistic}) we can establish the proof
\begin{equation}
\begin{split}
P\Big(\bigcup_{m=1}^{M(T+1)}&A_m^{ij}(T+1)\Big)\\
&=P\Big(\bigcup_{\ell=1}^N \big(\bigcup_{m=1}^{M(T)}B_m^{i\ell j}(T+1)\big)\Big) \\
&= 1-P\Big(\bigcap_{\ell = 1}^N \big(\bigcap_{m=1}^{M(T)}\overline{B}_m^{i\ell j}(T+1) \big)\Big)\\
&\leqslant 1-\prod_{\ell=1}^N P\big(\bigcap_{m=1}^{M(T)}\overline{B}_m^{i\ell j}(T+1)\big)\\
&\leqslant 1-\prod_{\ell=1}^N \big(1-\alpha_{i\ell}(T)p_{\ell j}(T+1)\big)\\
&= \alpha_{ij}(T+1),
\end{split}
\end{equation}
where the first inequality follows from the Harris-FKG inequality and the second inequality follows immediately from \eqref{eq:res}.
\end{proof}

\section{Lower Bounds}\label{sec:lower}
In this section we provide two alternative lower bounds on $P(i\xrightarrow{T} j)$. The performance of each bound depends on the distribution of the probabilities of edges across the network and over the observation window.

\subsection{Lower Bound I}\label{sec:lower1}
The first lower bound on $P(i\xrightarrow{T} j)$ relies on finding a clique graph inside the temporal network such that the probability of all the edges in this clique is above a certain threshold over the entire window of observation. More formally, we find a subset of nodes $\hat{V}\subseteq V$ for a fixed value $p_{min}$ such that $i,j \in \hat{V}~\mathrm{and}~ p_{\ell m}(t)\geq p_{min}, \forall \ell, m \in \hat{V}, t\in \{1,\dots, T\}.$

For any such $\hat{V}$ and $p_{min}$ we generate a new temporal network $G_{\hat{V}}$ for which we set the probability of all edges to be $p'_{\ell m}(t)=p_{min}, \forall \ell ,m \in \hat{V}, t\in \{1,\dots, T\}$.
Since the probabilities of edges are assumed to be identically $p_{min}$, we can apply the method in Section \ref{sec:exact} to the resulting temporal network with the vertex set $\hat{V}$ and find the probability $\beta_{ij}(T) := \hat{P}(i\xrightarrow{T} j)$, where we use the notation $\hat{P}$ (as opposed to $P$) to distinguish the probability of accessibility in the derived network (defined by vertex set $\hat{V}$ and identical probability $p_{min}$) from the probability of accessibility in the original network.

From the construction of $G_{\hat{V}}$, it is easy to conclude  $\beta_{ij}(T)  \leqslant P(i\xrightarrow{T} j)$ in the original network. Therefore any such subset $V'$ and $p_{min}$ provides a lower bound for the probability of accessibility between two nodes $i$ and $j$. Such a clique is not unique as it depends on the choice of $p_{min}$. It should be noted that the size of a clique by itself does not identify the bound. For instance, a smaller clique with a higher value of $p_{min}$ might result in a higher probability of accessibility and hence a better bound. Therefore, different values of $p_{min}$ should be examined, and the highest accessibility probability should be selected as the lower bound. Based on this heuristic, for a fixed $p_{min}$, we obtain the corresponding lower bound as follows (see also Fig. \ref{fig:clique}):

\begin{figure*}[t]
  \centering
  \includegraphics[trim = 15mm 75mm 15mm 35mm, clip,scale=.54]{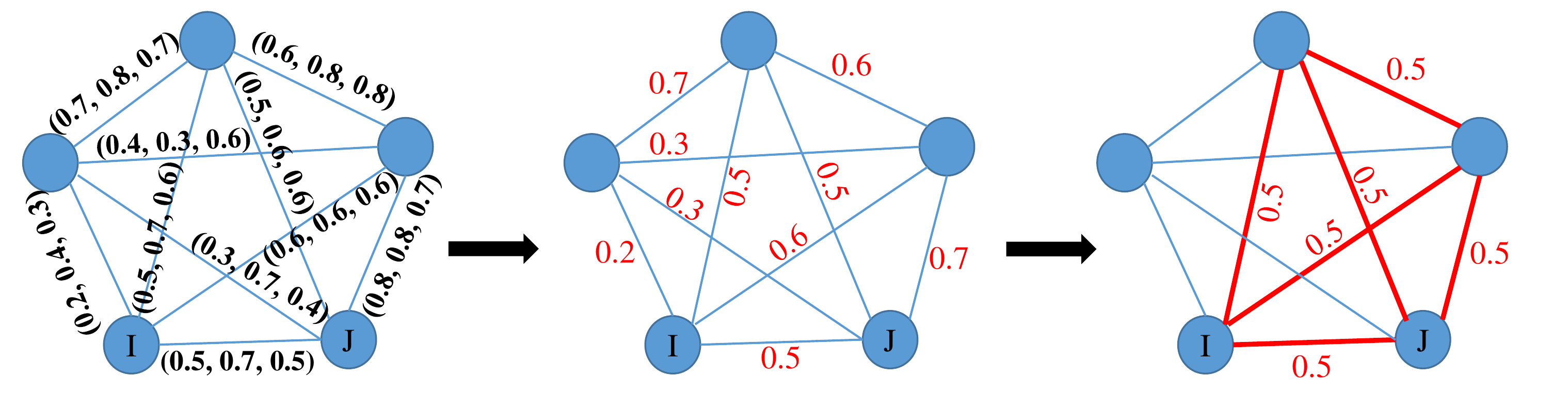}
  \caption{Lower Bound I: Searching for a clique based on a given threshold $p_{min} = 0.5$. The triplet on each edge represents the probabilities of links' be open in time slots $t = 1$, $t = 2$ and $t = 3$.} \label{fig:clique}
\end{figure*}

\begin{itemize}
\item \emph{Step 1}: Find the sets $E_c = \{(\ell,m): p_{\ell m} (t)\geqslant p_{min},  1\leqslant t \leqslant T \}$ and $V_c = \{\ell: \exists m \mathrm{~such~that~} p_{\ell m}(t) \geqslant p_{min}, 1\leqslant t \leqslant T\}$. Form the corresponding equivalent static graph $G_c = (V_c, E_c)$.
\item \emph{Step 2}: Using Bron-Kerbosch \cite{bron1973algorithm} algorithm find the set of all maximal cliques of $G_c$.
\item \emph{Step 3}: Select the largest clique $\hat{V}$ from the subset of cliques that includes $(i,j)$. Generate the temporal network $G_{\hat{V}}$ according to the selected clique, such that $p_{\ell m}(t) = p_{min},\forall \ell,m \in \hat{V}$.

\item \emph{Step 4}: Apply the method in Section \ref{sec:exact} to $G_{\hat{V}}$ and find $\hat{P}(i\xrightarrow{T} j)$.
\end{itemize}
By examining the above method and finding the maximum value of $\beta_{ij}(T)$ (denoted by $\beta_{ij}^*(T)$) for different values of $p_{min}$ \big(possibly for all values of $p_{min} = p_{\ell m}(t): p_{\ell m}(t)\leq \min_t\{p_{ij}(t)\}$\big) one can obtain $$\beta_{ij}^*(T)\leqslant P(i\xrightarrow{T} j).$$

\subsection{Lower Bound II}
The second lower bound is established based on finding a set of \emph{edge-disjoint paths}, because the events of such paths being open are statistically independent. This independence results in finding the probability of at least one path (from the selected edge-disjoint subset of all the paths) being open, without being concerned about the correlation between the paths (as there is no common edge between any two paths within the selected set). It is worthwhile to mention that in general the definition of an edge disjoint path in a temporal network is different from its equivalent in static graphs. In temporal networks an edge $(i,j,t)$ is not only identified by its two end nodes $i$ and $j$ but also with a label of time $t$. Two edges $(i,j,t)$ and $(i',j',t')$ are disjoint if $i\neq i'$ or $j\neq j'$ or $t\neq t'$. Therefore two edge disjoint paths might share the link between two nodes with two distinct time labels.

If we denote the waiting period at each node on a path $$R_k= i, v_1, \dots, v_{L_R-1}, j$$ of length $L_R$ by $t_1,\dots, t_{L_R}$, the existence of an open path between $i$ and $j$ implies that $t_1+\dots +t_{L_R} \leqslant T$. The waiting time at each node is the number of time slots from the current time slot to the time slot in which an edge to the next node in the path becomes open.
Clearly finding the probability of all sequences of numbers with summation less than $T$ given the probabilities of all the edges on this path (which are possibly time-variant as well), is computationally costly. However, if we set $$p_{min}= \min_{R_k} \min_t \{p_{iv_1}(t), p_{v_1v_2}(t), \dots, p_{v_{L_R-1}j}(t)\},$$ then we can assume all the edges have at least a success probability $p_{min}$. Therefore, $P(R_k(T))$, the probability of $R_k$ being open before time $T$, can be lower bounded by the following inequality
  \begin{equation}
  P(R_k(T))\geqslant 1-\sum_{m=1}^{L_R-1} \dbinom {T}{m} p_{min}^m (1-p_{min})^{T-m}.
  \end{equation}
  The inequality holds because the probability of $R_k$ (of length $L_R$)  being open is translated to the probability of observing at least $L_R$ successful outcomes of $T$  Bernoulli trials with parameter $p_{min}$.

  In what follows, it will be helpful to define the \emph{quality of path} to be
   \begin{equation}
   f(R_k(T)) = 1-\sum_{m=1}^{L_R-1} \dbinom {T}{m} p_{min}^m (1-p_{min})^{T-m}.\label{eq:pathquality}
    \end{equation}
    Our objective is to find edge disjoint paths with high qualities to obtain a tighter lower bound for $P(i\xrightarrow{T} j)$ (However, any set of disjoint paths would result in a lower bound). Fig. \ref{fig:disjoint} compares two paths for their quality $f(R_k(T))$. As it can be observed, a longer path (of length $3$) can have a higher quality than a shorter path (of length~$2$).
\begin{figure}[t]
  \centering
  \includegraphics[trim = 15mm 65mm 10mm 30mm, clip,scale=.30]{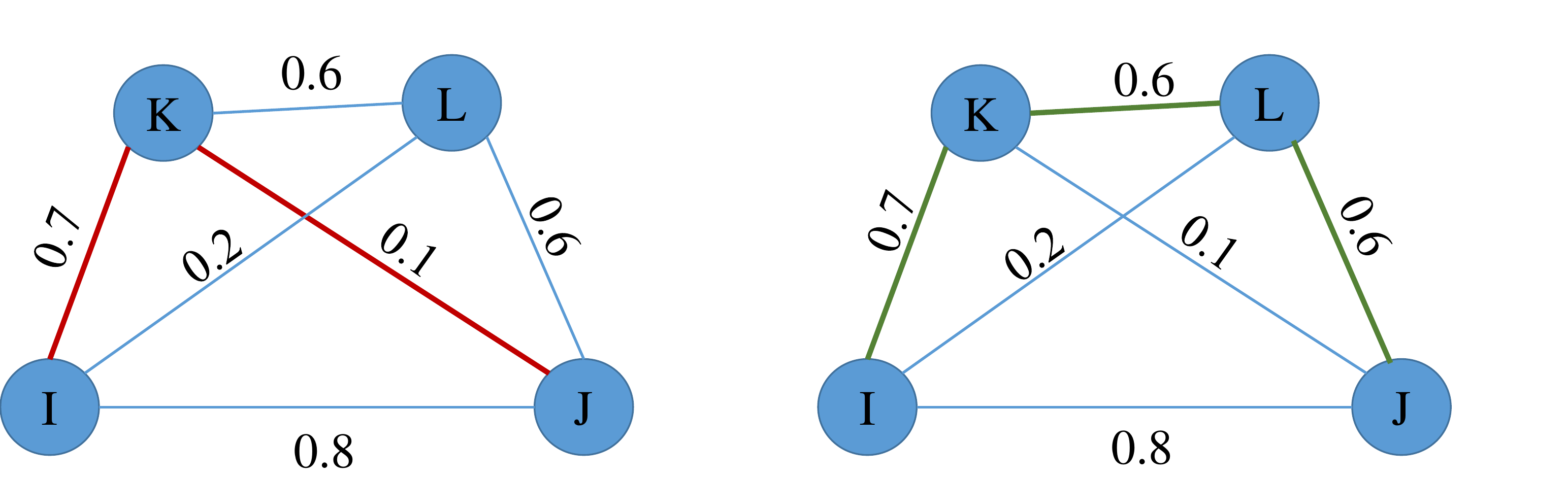}
  \caption{Comparing two paths $R_a = I,K,J$ and $R_b = I,K,L,J$ based on their quality~\eqref{eq:pathquality}, $f\big(R_a(10)\big) = 0.26$ (on the left) and $f\big(R_b(10)\big) = 0.98$ (on the right)} \label{fig:disjoint}
\end{figure}

To obtain the lower bound, we form a set of edge disjoint paths. This is done by first forming a random graph $G_{min}$ where we set the probability $p_{\ell m} = \min_t\{p_{\ell m}(t)\}$ for any pair of nodes in the network. We sort the outgoing edges from $i$ such that $p_{il_1}\geqslant p_{il_2} \geqslant \dots$. We start from $(i, l_1)$ to form the first path i.e. $R_1(T)$. Initially this path would be $i,l_1, j$. At each step the quality of path $f(R_1(T))$ is measured, the outgoing edges from the last node (the node before $j$) are sorted and the edge with maximum probability (between the current last node and a node $m$) is selected. If adding node $m$ as the last node before arriving at $j$ increases the quality of the path, the discovered path is updated by adding $m$ to the path just before $j$. For instance, in Fig. \ref{fig:disjoint}, for $T = 10$, we initially examine the path $I, K, J$ and then we update the path to $I, K, L, J$ as the latter has a higher quality. We stop adding nodes to the path as soon as the quality of the path begins to decrease. Once the path $R_1(T)$ is formed, we remove all the edges of this path from $G_{min}$ and we start generating a new path by repeating the same procedure on $G_{min}$ starting from the next outgoing edge from $i$ in the sorted list of edges. We continue this algorithm until we cannot generate any new path from $i$ to $j$. This algorithm of path selection is represented in Algorithm \ref{myalgorithm}.
If we denote the set of generated paths by $\mathcal{R}$, the lower bound II is given as follows
\begin{equation}
P(i\xrightarrow{T} j)\geqslant 1-\prod_{k=1}^{|\mathcal{R}|} \big(1-f(R_k(T))\big) = \gamma_{ij}(T).
\end{equation}
\begin{algorithm}[H]
\caption{Path Selection}\label{myalgorithm}
\begin{algorithmic}
\While {$\exists ~path  \in G_{min}$}
   \State $current~node\gets i$
   \State $next~node \gets \arg\max_{\ell}p_{i\ell}, \ell \neq i$
   \State $k \gets 1$
   \State $R'_k \gets i,next~node$
   \State $R_k \gets R'_k, j$
   \State $g \gets 0$
  \While {$f\big(R_k(T)\big)\geqslant g$}
     \State $g \gets f\big(R_k(T)\big)$
     \State $current~node \gets next~node$
     \State $next~node \gets \arg\max_{\ell}p_{current~node,\ell},~\ell \neq ~current~node$
     \State $R'_k \gets R'_k,next~node$
     \State $R_k \gets R'_k, j$
  \EndWhile
  \State $k \gets k+1$
\EndWhile
\end{algorithmic}
\end{algorithm}



\section{Quantization of Continuous Random ON-OFF Links}\label{sec:ON-OFF}
In many scenarios, a temporal network evolves in a continuous time fashion. Said in a different way, we may have the probability distribution of the ON or OFF intervals in the continuous-time domain. The ON period refers to the time interval that there exists an open edge between two specific nodes.  By splitting the entire observation window into time slots of a given length, and deriving the probability of the state of an edge being observed at the end of each time slot, we can quantize such continuous distributions.
Such a quantization associates a discrete probability to each time slot.
 This enables us to use the bounds given above to estimate the probability of accessibility between any two nodes in the network.

The link between any two nodes could have started from being ON or OFF at $t=0$ and could have been switched ON and OFF any $m$ number of times between $m=0$ to $m=\infty$ during the interval $t = 0$ to $t = T_0$ (see Fig. \ref{fig:ON-OFF}). Therefore, an infinite number of possible events should be considered when deriving the probability of observing the edge in the ON position.

If we are given the probability distributions of the ON and OFF periods (for a specific edge) denoted by $f_{ON}(\tau)$ and $f_{OFF}(\tau)$ and also the probability of starting from the ON position at $t = 0$ (represented by $p_0$), we can obtain the probability of being in the ON position (denoted by $SW=1$, and $SW=0$ for OFF)  at time $T_0$ which is derived as follows:

\begin{figure}[t]
  \centering
  \includegraphics[trim = 0mm 0mm 135mm 0mm, clip,scale=.40]{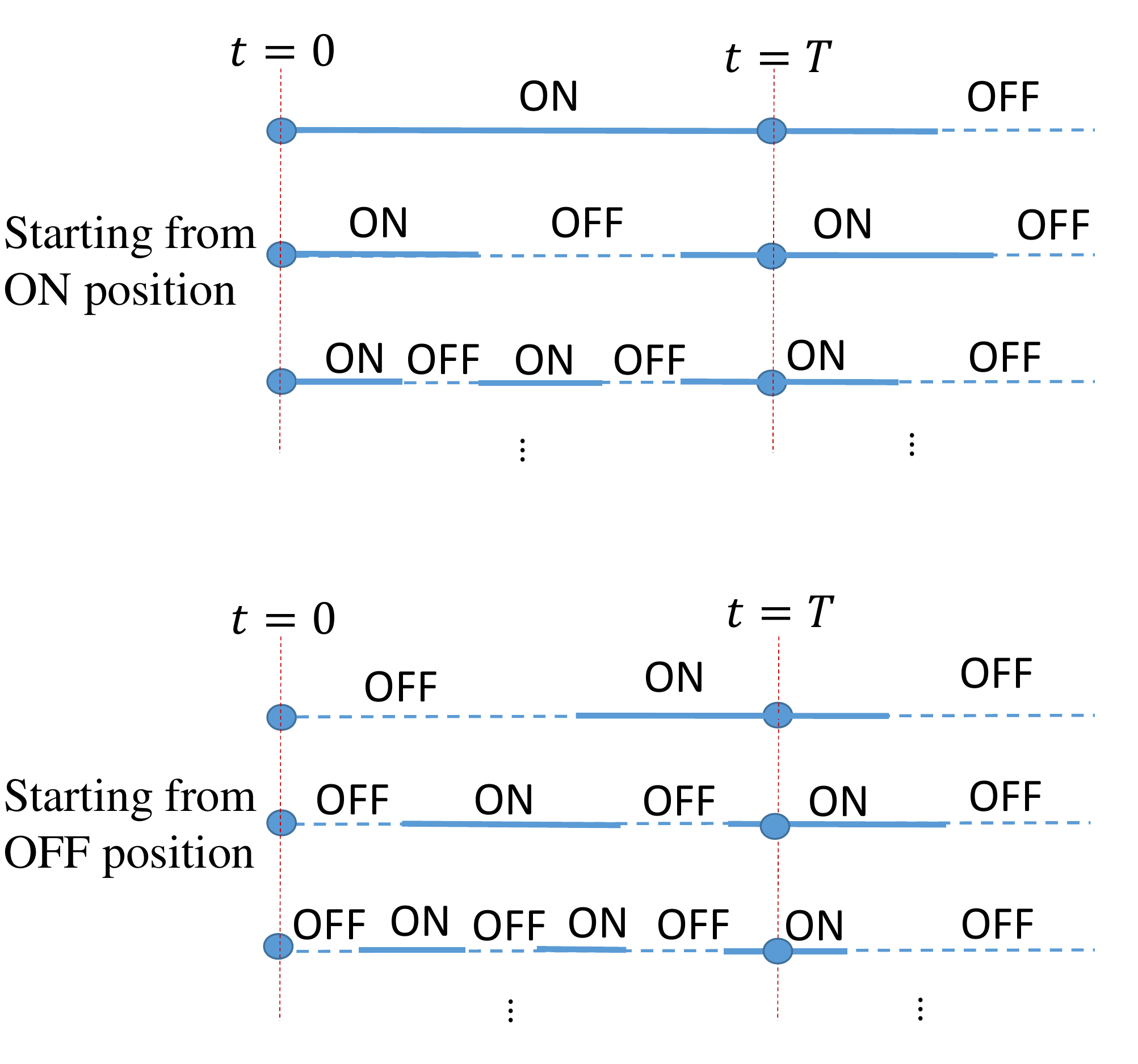}
  \caption{Different possibilities (events) based on starting from ON or OFF position for a given edge} \label{fig:ON-OFF}
\end{figure}

\begin{equation}\label{eq:ON-OFF}
\begin{split}
&P(SW=1)= p_0\sum_{m=0}^{\infty}\int_{0}^{T_0} f_{S_m^{ON}}(s)\big(1-F_{ON}(T_0-s)\big) ds \\
&+(1-p_0)\sum_{m=0}^{\infty}\int_{0}^{T_0} f_{S_m^{OFF}}(s)\big(1-F_{ON}(T_0-s)\big) ds
\end{split}
\end{equation}
where,

\[
f_{S_m^{ON}}(s)
     \begin{cases}
       1 & m=0\\
      \underbrace{f_{ON}\ast \dots \ast f_{ON}}_{m}\underbrace{\ast f_{OFF} \dots \ast f_{OFF}}_{m} & m>0 \\
 \end{cases}
\]

$f_{S_m^{OFF}}(s) = \underbrace{f_{ON}\ast \dots \ast f_{ON}}_{m}\underbrace{\ast f_{OFF} \dots \ast f_{OFF}}_{m+1}$

\noindent and $F_{ON}$ is the cumulative distribution function (CDF) of the ON time distribution. Also $S_m^{ON}$ represents the random variable describing the sum of $m$ periods of ON-OFF (an ON period followed by an OFF period), assuming that at $t = 0$ the edge has been $ON$. Similarly, $S_m^{OFF}$ represents a similar sum with the assumption that at $t = 0$, the edge has been OFF. Starting from ON (with probability $p_0$) or OFF (with probability $1-p_0$) at $t = 0$ are two mutually exclusive events. For each of these two possibilities the summation of the probabilities of infinitely many exclusive events are calculated as in Equation (\ref{eq:ON-OFF}). Each of these many events corresponds to a given $m$ number of switchings (ON-OFF periods). To derive the probability of such an event (i.e. being in an ON position at $t = T_0$ starting from an ON position at $t = 0$ with $m$ switchings)  one needs to obtain the probability of the event that $S_m^{ON}\leqslant T_0$ (the total duration of the $m$ ON-OFF periods is less than $T_0$) and after these $m$ periods an ON period lasts at least until $t = T_0$ and possibly onwards.
Similar arguments apply to the case of starting from the OFF position at $t = 0$, which is reflected in the integrals within the second summation in Equation (\eqref{eq:ON-OFF}).
The integral terms inside the summations give the probability of such events. Since the last ON period can take any value larger than $T_0-S_m^{ON}$, its probability is $1-F_{ON}(T_0-s)$.

It should be noted that this calculation provides the probability of an edge being open if it is observed at a specific instance of time $T_0$. However, another useful probability would be the probability of observing at least one ON period between $t = 0$ and $t = T_0$ to be reported as the probability of the edge being open over the corresponding time slot. Moreover, in practice one can choose the quantization step of time ($T_0$) very small such that with high probability at most one switching occurs within each time slot to ease the calculations. We skip the details of these last two possibilities. However, we apply these simplifications to the numerical experiment in Section \ref{real}.

\section{Numerical Experiments}\label{sec:numerical}
The bounds and mapping from continuous interval distributions to discrete probabilities of edges introduced in this paper have been examined against synthetically generated networks (Monte Carlo simulations) as well as a real-world vehicular network dataset \cite{roma-taxi-20140717} collected from the GPS devices of the taxis in Rome, Italy. In the following, these experiments and their results are reported.
\subsection{Synthetically Generated Networks}\label{sec:monte_carlo}
In our first experiment, we considered a network with $N = 20$ nodes and the probability of any edge $(\ell, m)$ was selected randomly from the range $[0.05~~ 0.1]$. However, we fixed this probability for the observation window $t = 1,\dots, 15$. Fig. \ref{fig:accessibility} shows the evolution of the the probability of accessibility $P(i \xrightarrow{t} j)$ for a randomly selected pair of nodes $i$ and $j$ over the mentioned window. As it can be observed from the figure, the upper bound is very close to the expected probability of accessibility obtained from Monte-Carlo simulations\footnote{For convenience, in the presentation we have shown the probability distributions over a continuous domain by interpolation in Fig \ref{fig:accessibility} and \ref{fig:delay}.}. It should be noted that the Lower Bound I is computationally slow because of the clique search algorithm; however the other bounds, and particularly the upper bound, are easily applicable to large-scale networks as well.

\begin{figure}[t]
  \centering
  \includegraphics[trim = 0mm 96mm 0mm 50mm, clip,scale=.40]{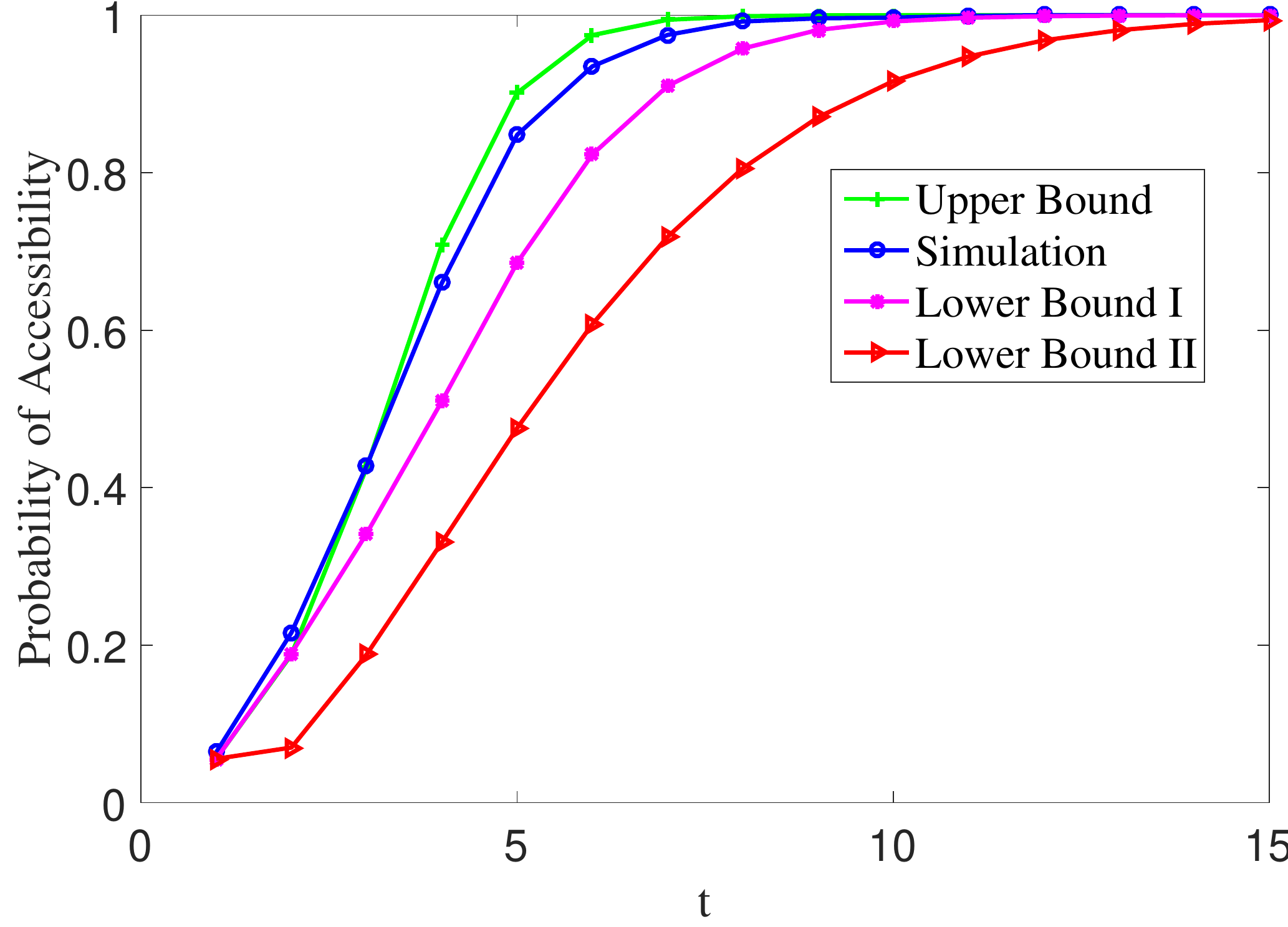}
  \caption{Probability of accessibility between two specific nodes} \label{fig:accessibility}
\end{figure}

\begin{figure}[t]
  \centering
  \includegraphics[trim = 5mm 95mm 20mm 70mm, clip,scale=.48]{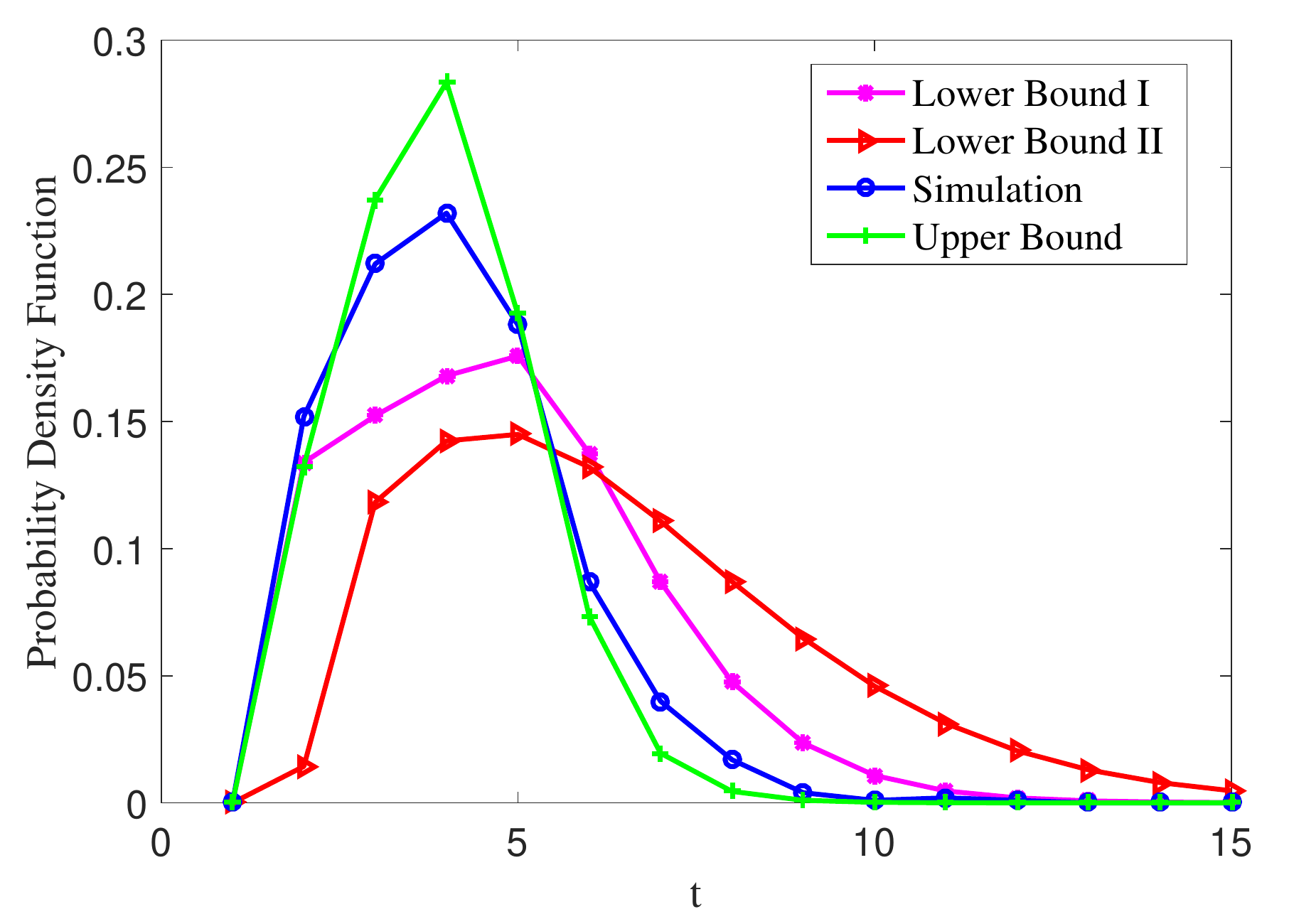}
  \caption{Accessibility delay distribution} \label{fig:delay}
\end{figure}

\begin{figure}[t]
  \centering
  \includegraphics[trim = 15mm 75mm 10mm 80mm, clip,scale=.47]{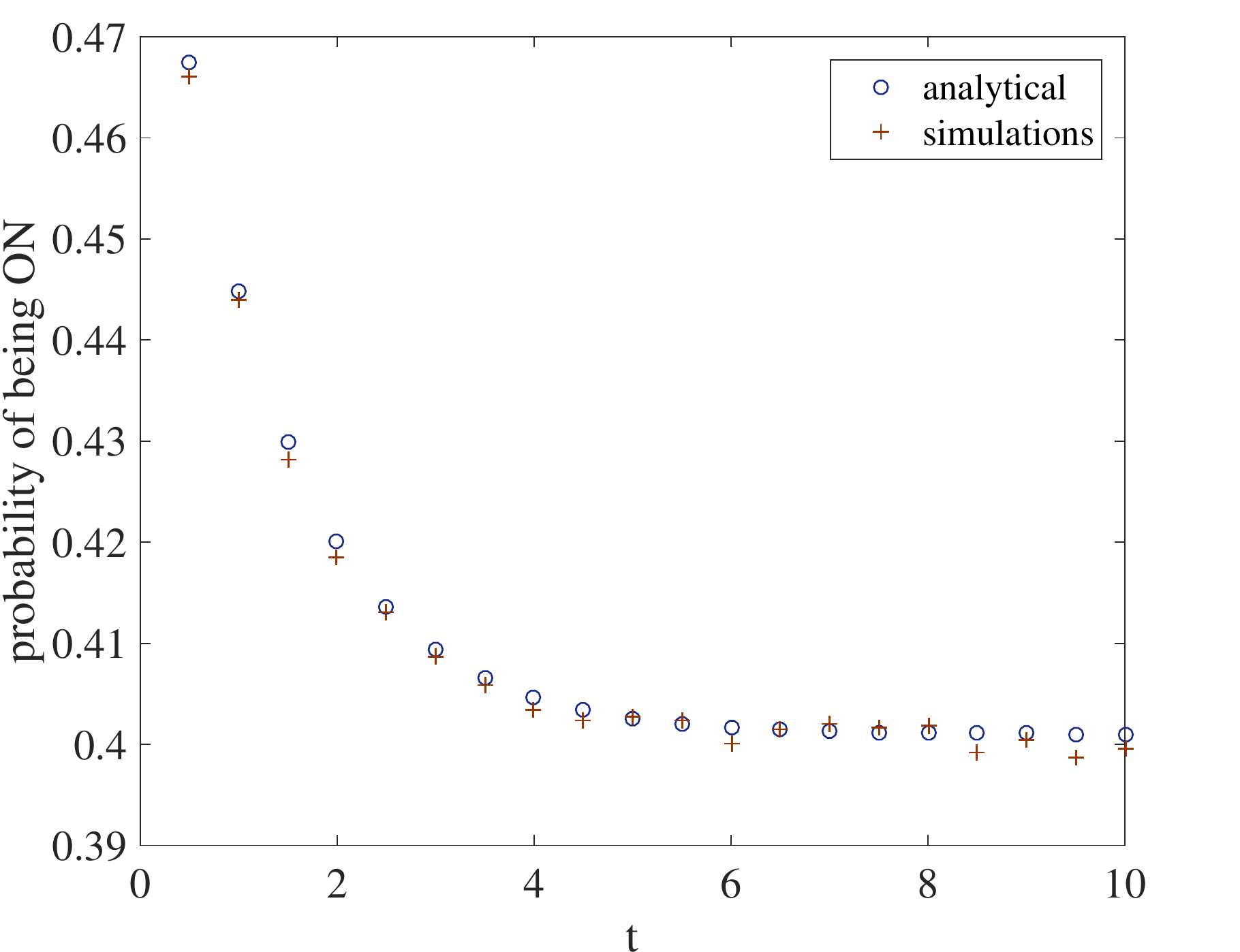}
  \caption{Comparing probability of an edge existing between two specific nodes obtained from simulations and ON-OFF analytical model for exponentially distributed ON-OFF periods} \label{fig:ON-OFF-2}
\end{figure}

Another interesting observation is the relationship between accessibility and delay (or trip duration) between two specific nodes  in the network. We define the delay from node $i$ to $j$ to be the first time slot $t$ that an open temporal path from $i$ to $j$ becomes present starting from $t=1$. We denote this delay by $D_{i \rightarrow j}$.
 The probability of accessibility $P(i\xrightarrow{t} j)$ can be interpreted as the cumulative distribution function (CDF) of the delay. Therefore, the delay probability distribution function (pdf), denoted by $\pi(D_{i\rightarrow j})$, would be immediately available by calculating the derivative of the probability of accessibility, i.e.

\begin{equation}
\pi(D_{i\rightarrow j}) = \frac{dP(i\xrightarrow{t} j)}{dt}
\end{equation}

Therefore, the bounds given in this paper can be similarly differentiated to approximate the density function that describes the delay distribution. Fig. \ref{fig:delay} shows the distribution of delay  for the experiment setting in Fig. \ref{fig:accessibility} as well as approximated distributions obtained by differentiating the lower and  upper bounds.

Another experiment was performed to verify the ON-OFF model derived in Section \ref{sec:ON-OFF}. In our experiment, we assumed that the ON and OFF periods for an edge follow exponential distributions with parameters $\beta^{ON}$ and $\beta^{OFF}$, respectively. Therefore the probability distribution of the total length of being ON (or OFF) given $m$ switchings (changing from ON to OFF or from OFF to ON) is the convolution of $m$ exponentials, i.e., a Gamma distribution $\Gamma(\beta^{ON}, m)$ (or $\Gamma(\beta^{OFF}, m)$). Therefore the pdf of the random variable $S_m^{ON}$ (or $S_m^{OFF}$) is obtained by calculating the convolution of the two Gamma distributions with different parameters. Various methods for such calculations are available in the literature \cite{moschopoulos1985distribution, jasiulewicz2003convolutions}.
\begin{figure}[t]
 \includegraphics[trim = 12mm 70mm 10mm 70mm, clip,scale=.48]{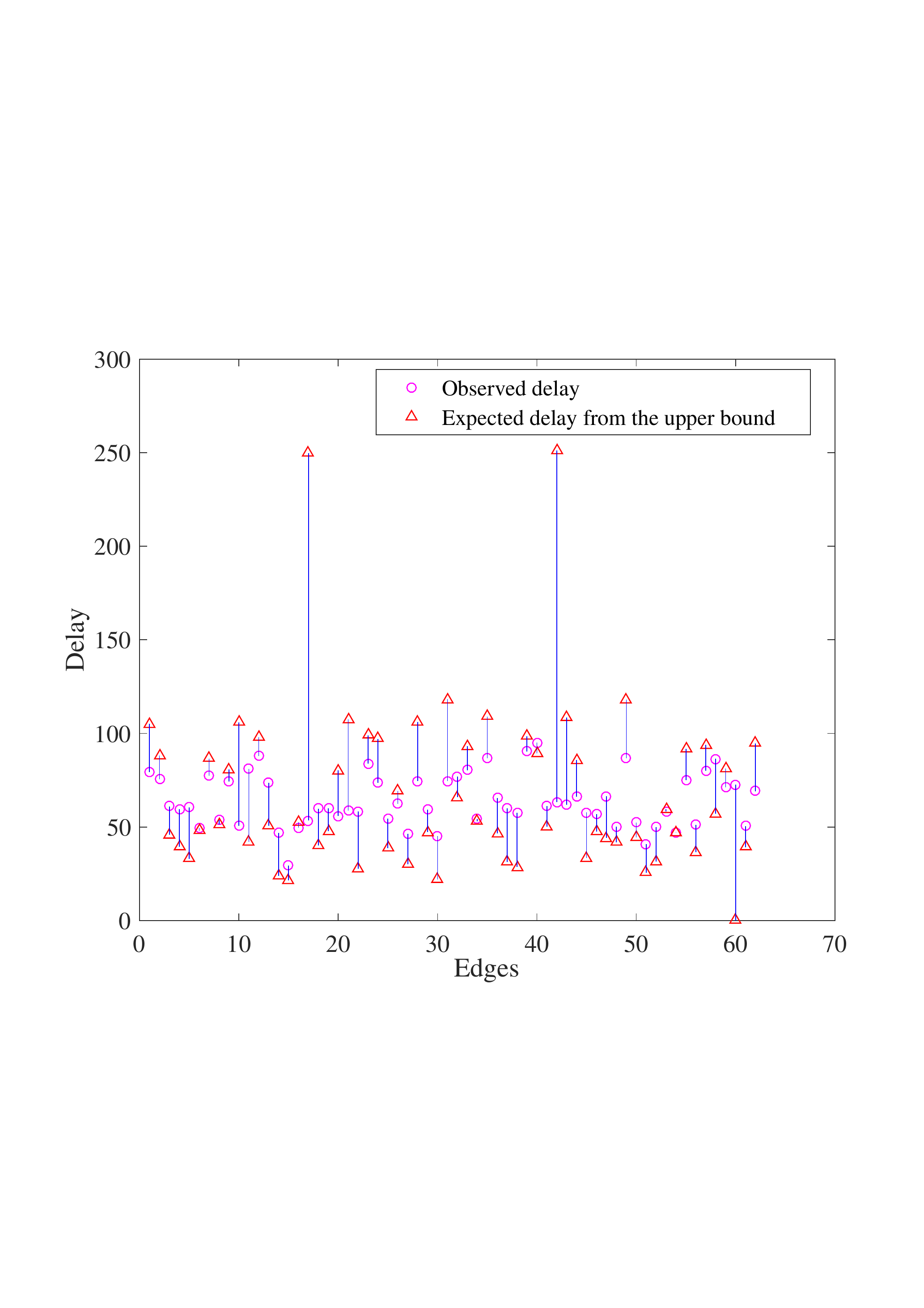}
  \caption{Comparing the expected delay derived from the upper-bound and estimated edge probabilities over the training phase, and the average delay from the $10$ experiments.} \label{fig:delay2}
\end{figure}
\begin{figure}[t]
 \includegraphics[trim = 16mm 83mm 20mm 90mm, clip,scale=.51]{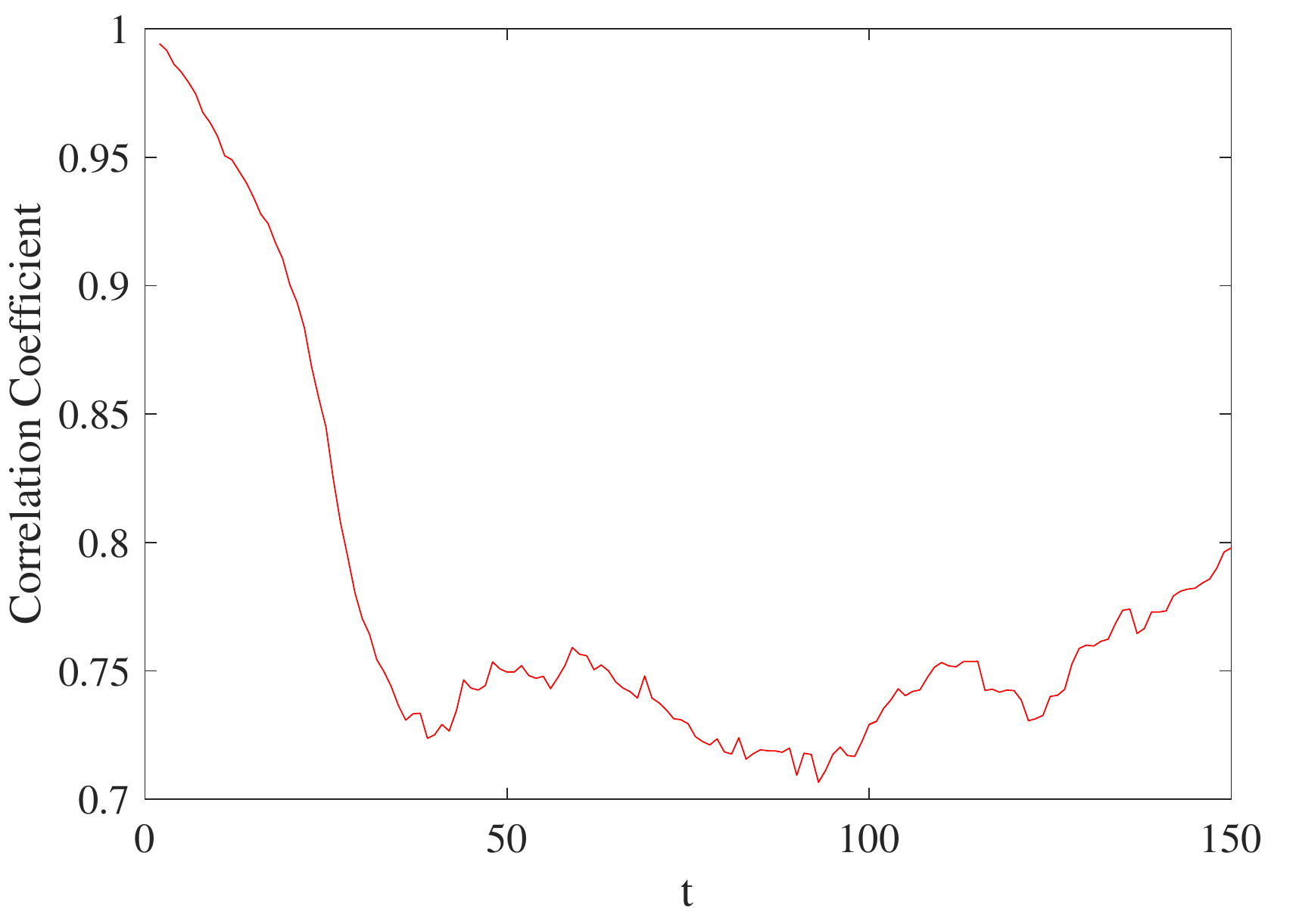}
  \caption{Correlation between the predicted probabilities of accessibility and the empirically estimated values.} \label{fig:correlation}
\end{figure}

 We considered a window observation of $T = 10$ units of time in the continuous domain and observed the status of an edge over the mentioned window. We assumed that the ON and OFF periods are exponentially distributed with $\beta^{ON} = 2$ and $\beta^{OFF} = 3$, respectively. Moreover, we assumed that starting from ON or OFF positions was equally probable (i.e. $p_0 = 0.5$).
 We have run $10000$ trials for this edge and measured the status of the edge (ON or OFF) at the end of every time slot of length $0.5$ from $t = 0$ to $t = 10$ and averaged them over these trials.
 Since in a practical setting the total number of switchings cannot be infinite we limited the number of switchings to be from $m = 0$ to $ m = 40$.
 The comparison between the results of this experiment and the probability obtained from Equation (\ref{eq:ON-OFF}) is shown in Fig. \ref{fig:ON-OFF-2} which verifies the accuracy of the method in Section \ref{sec:ON-OFF} for the mentioned range of $m$.
 \subsection{Real-World Dataset} \label{real}
  As it could be seen from the experiments on synthetically generated networks, the gap between the upper bound and estimated probabilities from Monte-Carlo simulations for networks with randomly assigned probabilities on edges is fairly narrow. Hence, the bound can be naturally nominated as a \emph{predictor} for the probability of accessibility in real-world networks. However, it would be crucial to observe the performance of the given bound beyond the abstractions of the previously mentioned synthetic networks and under more realistic conditions. With this objective, we have used the data collected from the GPS devices of a group of taxis in Rome \cite{roma-taxi-20140717}. A vehicular network has been selected for this experiment as such networks are mathematically more tractable. The reason for this is that in such networks the inter-contact time has been shown to follow an exponential distribution \cite{zhu2010recognizing}. From the entire set of $387$ taxis in the original dataset, we have randomly selected $25$ and observed them over a period of one month (February). The time period of observation has been split to $T = 2688$ time slots of $15$ minutes. We have assumed that two taxis are in contact if their distance is less than $R = 50$ meters at any fraction of time within a given time slot. The occurrence of a contact between two vehicles over a given time slot can be represented by the existence of an edge in the equivalent adjacency graph over that time slot (where each vehicle is represented by a vertex in such a graph).

  We further assume that the duration of a contact is negligible when compared to the length of a time slot. If two vehicles are in contact for a longer period, this can be considered as several consecutive contacts. With this assumption and given the memorylessness of the exponential distribution, the origin of time would not have any impact on the probability of a contact occurring  between two vehicles over a given time slot. If we denote the inter-contact time between two vehicles by random variable $X \sim \lambda e^{-\lambda x}$ and the duration of time slot by $t_0$, the probability of a contact occurring over a given time slot can be approximated by $\int_{0}^{t_0}\lambda e^{-\lambda t}dt = 1-e^{-\lambda t_0}$. It should be noted that here we have assumed that $t_0\ll \frac{1}{\lambda}$. Therefore, it can be assumed that with a high probability at most one contact occurs between the two vehicles. Therefore, transforming the continuous distribution of inter-contact time between the vertices in the corresponding graph would be considerably simpler than the general procedure described in Section \ref{sec:ON-OFF}.

 The experiment is comprised of two phases. In the first phase the distribution of inter-contact time between any two individual vehicles is estimated by fitting to an exponential distribution. In other words, the first phase is used for \emph{training}. We have allocated the period $1\leqslant t \leqslant 1100$ for training. In the second phase we use the distributions obtained from the training to predict the probability of accessibility between pairs of vehicles over the period $1101\leqslant t\leqslant 2600$. We divide this period of $1500$ time slots to $10$ equal and distinct subperiods of $150$ time slots. For each time slot, the number of experiments in which any vehicle $j$ has been accessible from $i$ is assumed to be an estimate of the probability of accessibility from $i$ to $j$.
Fig. \ref{fig:delay2} compares the average delay obtained empirically from the second phase and the predicted expected delay (obtained by using the upper bound as the predictor for accessibility probability) from the training phase for a subset of vertex pairs. We have only selected those pairs that in at least $8$ out of the $10$ experiments accessibility has been established. To avoid a dense figure, half of the pairs have been randomly selected and their delays are compared in the figure.

   Moreover, to evaluate the goodness of the upper bound as a predictor for the accessibility probability, we have measured the correlation coefficient between the vector of all estimated probabilities obtained from the experimental phase and the vector of probabilities predicted by the upper-bound for the entire set of pairs of vertices. Fig. \ref{fig:correlation} represents the variations of the correlation coefficient over time. As it can be observed, even for such a small number of experiments (ten) and for a relatively short phase of training ($1100$ time slots), the correlation coefficient remains above $0.7$ almost all the time. Therefore, the combination of the upper bound (as the predictor) and the contact probability estimation (based on a learning phase) performs with a very good accuracy.

 \section{Conclusion}\label{sec:conclusion}
 Formation of paths in complex networks with time-varying edges where the presence and absence of edges is a function of time and possibly random, is far more complicated than static graphs. In this paper, we studied the formation of such paths and the notion of accessibility between nodes in random temporal networks at a microscopic level. Finding the exact probability of having access from one node to another in such networks is rather complicated. We provided a set of bounds on this probability for a very general setting of probabilities. Moreover, we extended our result to the continuous-time domain networks. We evaluated our analytical results with numerical experiments. The microscopic level analysis given in this paper can be a ground for macroscopic analysis of random temporal networks in future.
\vspace{6mm}
\section*{Acknowledgment}
This work was supported in part by EPSRC grant number EP/N002350/1 (Spatially Embedded Networks).

\nocite{*}

\bibliography{myrefs}

\end{document}